\title{Regret Bounds for Batched Bandits}
\author{Hossein Esfandiari \\ Google Research \and Amin Karbasi \\ Yale University \and Abbas Mehrabian
\\ McGill University \and Vahab Mirrokni \\ Google Research  }
\DeclareMathOperator*{\argmax}{argmax}
\newcommand{\muhat}{\widehat{\mu}}
\newcommand{\thetahat}{\widehat{\theta}}
\newcommand{\thetastar}{\theta^{\star}}
\newcommand{\astar}{a^{\star}}
\newcommand{\acal}{\mathcal{A}}
\newcommand{\R}{\mathbf{R}}
\newcommand{\dop}[2]{\left\langle #1, #2 \right\rangle}
\newcommand{\transpose}{^{\mathsf{T}}}
\theoremstyle{plain}
\newtheorem{theorem}{Theorem}[section]
\newtheorem{lemma}[theorem]{Lemma}
\theoremstyle{definition}
\theoremstyle{remark}
\newtheorem*{remark}{Remark}
\newcommand{\eps}{\varepsilon}
\renewcommand{\epsilon}{\varepsilon}
\newcommand{\E}{\mathbf{E}}
\begin{document}\sloppy
	
\maketitle




\begin{abstract}
We present simple and efficient algorithms for the batched stochastic multi-armed bandit 
and batched stochastic linear bandit problems. We prove  bounds for their expected regrets that improve over the best known regret bounds for any number of batches.
In particular, our algorithms in both settings achieve the optimal expected regrets by using only a logarithmic number of batches. 
We also study the batched adversarial multi-armed bandit problem for the first time and find the optimal regret, up to logarithmic factors, of any algorithm with predetermined batch sizes.
\end{abstract}
\section{Introduction}
A central challenge in  optimizing many complex systems, such as experimental design~\citep{robbins1952some}, clinical trials~\citep{batchedtwoarms}, hyperparameter tuning~\citep{snoek2015scalable}, and product marketing~\citep{bertsimas2007learning}, is to simultaneously explore the unknown parameter space while at the same time exploit the acquired knowledge for maximizing the utility. In theory, the most convenient way is to explore one parameter at a time.  However, in practice, it is often possible/desirable, and sometimes the only way,  to explore several parameters in parallel. A notable example is designing clinical trials, where it is impractical to wait to observe the effect of a drug on a single patient before deciding about the next trial. Instead, groups of patients with multiple treatments are studied in parallel. Similarly, in marketing and advertising, the efficacy of a strategy is not tested on individual subjects one at a time; instead, multiple strategies are run simultaneously in order to gather information in a timely fashion. Similar issues arise in crowdsourcing platforms, where multiple tasks are distributed among users~\citep{kittur2008crowdsourcing}, and time-consuming numerical simulations, which are prevalent in reinforcement learning~\citep{le2019batch,lange2012batch}.

Parallelizing the exploration of the parameter space has a clear advantage: more information can be gathered at a shorter period of time. It also has a clear disadvantage: information cannot be immediately shared across different parallel paths, thus future decisions cannot benefit from the intermediate results. Note that a fully sequential policy, in which exploration is done in a fully sequential manner, and a fully parallel policy, in which exploration is completely specified a priori without any information exchange, are two extremes of the policy spectrum. In fact, carefully calibrating between the information parallelization phase (how many experiments should be run in parallel) and the information exchange phase (how often information should be shared) is crucial for applications in which running experiments is costly or time-consuming. \emph{Batch policies}, which are the focus of this paper, aim to find the sweet spot between these two phases. The main challenge is to carefully design batches of experiments, out of a combinatorially large set of possibilities, which can be run in parallel and explore the parameter space efficiently while being able to exploit the parameter region with the highest utility. 

In this paper, we study the problem of  batch policies in the context of multi-armed  and  linear bandits with the goal of minimizing \emph{regret}, the standard benchmark for comparing  performance of bandit policies. We provide a comprehensive theoretical understanding of these  problems by designing  algorithms along with hardness results. 


\section{Bandits, Regret, and Batch Policies}
A \emph{bandit problem} is a sequential game between a player/learner and an environment. The game is played over $T$ rounds, called the {\em time horizon}. In each round, first the player chooses an action from a set of actions and then the environment presents a reward. For instance, in clinical trials, the actions correspond to the available treatments and the rewards correspond to whether the treatment cured the patient or not. 
As the player does not know the future, she follows a \emph{policy}, a mapping from histories to actions. Similarly, the environment can  also be formulated as a mapping from actions to rewards. Note that both the player and the environment may randomize their decisions. The standard performance measure for a bandit policy is its \emph{regret}, defined as the difference between the total expected reward collected by the policy and the total expected reward collected by the optimal policy. In this paper, we focus on the following bandit models: stochastic and adversarial multi-armed bandits and 
stochastic linear bandits.

\subsection{Multi-Armed Bandits}
Traditionally, actions are referred to as `arms'
and `taking an action' is referred to as `pulling an arm.'
A \emph{multi-armed bandit} is a one-player game in which the number of arms is a finite number $K$. 
Let $[K]\coloneqq\{1,2,\dots,K\}$ denote the set of arms.
In each round $t=1,2,\dots,T$, the player pulls an arm $a_t\in[K]$ and receives a corresponding reward $r_t$. 

We consider two possible generation models for the rewards:
the \emph{stochastic} setting and the \emph{adversarial} setting.
In the former, the  rewards of each arm are sampled in each round independently from some fixed distribution supported on $[0,1]$.
In other words, each arm has a potentially different reward distribution, while the distribution of each arm does not change over time.
Suppose the player pulls the arms
$a_1,a_2,\dots,a_T$, and 
suppose $\mu_i$ denotes the mean of arm $i\in[K]$.
Then, in the stochastic setting, the \emph{expected regret} is defined as
\[
{\E[\textnormal{Regret}]}\coloneqq T \max_{i\in[K]}\mu_i - \E\left[\sum_{t=1}^{T} \mu_{a_t} \right].
\]

In contrast to the stochastic setting, a multi-armed adversarial bandit is specified by an \emph{arbitrary} sequence of rewards $(r_{i,t})_{i\in[K],t\in[T]}\in[0,1]$. In each round $t$, the player chooses a distribution $P_t$; an arm $a_t\in[K]$ is sampled from $P_t$ and the player receives reward $r_{a_t,t}$. In this sense, a policy can be viewed as a function mapping history sequences to distributions over arms. The  expected regret is defined as the difference between the expected reward collected by the policy and  the best fixed action in hindsight, i.e., 
\[
\E[\textnormal{Regret}]\coloneqq \max_{i\in[K]} \sum_{t=1}^T r_{i,t} - \E\left[\sum_{t=1}^T r_{a_t,t}\right]. 
\]
Note that the only source of randomness in the regret stems from the  randomized policy used by the player. 
Randomization is indeed crucial for the player and is the only way for her to avoid a regret of $\Omega(T)$.

\begin{remark}
The assumption that the rewards are bounded in $[0,1]$ is a standard normalization assumption in online learning, but it is immediate to generalize our algorithms and analyses to reward distributions bounded in any known interval, or
(in the stochastic case) to Gaussian or subgaussian distributions whose mean lie in a known interval. 
The only change in the analysis is that instead of using Hoeffding's inequality which requires a bounded distribution, one has to use 
a concentration inequality for sums of subgaussian distributions, see, e.g., \citet[Proposition~2.5]{wainwright2019high}.
\end{remark}

\subsection{Stochastic Linear Bandits}
In a stochastic linear bandit, each arm is a vector $a\in \R^d$ belonging to some action set $\mathcal{A}\subseteq \R^d$, and there is a parameter $\thetastar\in \R^d$ unknown to the player. In round $t$, the player chooses some action $a_t\in \mathcal{A}$ and receives reward $r_t = \langle a_t , \thetastar \rangle + \mu_t$, where  $\mu_t$ is   a zero-mean 1-subgaussian noise;
that is, $\mu_t$ is independent of other random variables,
has $\E \mu_t = 0$ and satisfies
\[
\E \left[ e^{\lambda \mu_t} \right] \leq \exp (\lambda^2/2)
\qquad \forall \lambda \in \R.
\]
Note that any zero-mean Gaussian distribution with variance at most 1
and any zero-mean distribution supported on an interval of length at most 2 satisfies the above inequality.
For normalization purposes, we assume that 
$\|\thetastar\|_2\leq1$
and
$\|a\|_2\leq1$ for all arms $a\in \mathcal{A}$.

Denoting the pulled arms by $a_1,\dots,a_T$, the expected regret of a stochastic linear bandit algorithm is defined as
\[
\E[\textnormal{Regret}]
\coloneqq T \sup_{a\in\acal}\dop{a}{\thetastar} -\E\left[ \sum_{t=1}^{T} \dop{a_t}{\thetastar}\right].
\]

\subsection{Batch Policies}
As opposed to the bandit problems described above, in the \emph{batch mode}, the player commits to a sequence of actions (a {\em batch} of actions) and observes the rewards {\em after all actions in that sequence are played}. 
More formally, at the beginning of each batch $i=1,2,\dots$, the player announces a list of arms/actions to be pulled/played.
Afterwards, she receives a list of pairs consisting of arm indices and rewards, corresponding to the rewards generated from these pulls.
Then the player decides about the next batch.

The batch sizes could be chosen non-adaptively or adaptively. In a \emph{non-adaptive} policy, the batch \emph{sizes} are fixed before starting the game, while an \emph{adaptive} policy is one in which the batch sizes may depend on previous observations of the algorithm. Obviously, an adaptive policy is more powerful and may achieve a smaller regret. 
In both cases, the player is subject to using at most a given number of batches, $B$. 
Moreover, the total number of actions played by the player must sum to the {\em horizon} $T$.
We assume that the player knows the values of $B$ and $T$.
Notice that the case $B=T$ corresponds to original bandit problems where actions are committed fully sequentially and has been studied extensively, see, e.g., \citet{torcsababook}.
Thus, we refer to the case $B=T$ as the \emph{original} or the \emph{sequential} setting.

Our algorithms for stochastic bandits are adaptive,
while in the adversarial setting we focus mostly on non-adaptive algorithms.

\section{Contributions and Paper Outline}
We provide analytic regret bounds for the batched version of three bandit problems: stochastic multi-armed bandits, stochastic linear bandits, and adversarial multi-armed bandits.

Recall that $K$ denotes the number of arms,
$T$ the time horizon and $B$ the number of batches.
The case $B=T$ corresponds to the sequential setting which has been studied extensively, while 
if $B=1$ then no learning can happen,
thus we are mostly interested in the regime $1< B < T$.

\subsection{Stochastic Multi-Armed Bandits}
Let $\Delta_i\coloneqq \max_{a\in[K]}\mu_a - \mu_i \geq0$ denote the gap of arm $i$.
For stochastic multi-armed bandits, the optimum regret achievable in the easier sequential setting is
$ O\Big(\log(T)\sum_{i:\Delta_i>0}{\Delta_i}^{-1}\Big)$; this is achieved, e.g., by the well-known upper confidence bound (UCB) algorithm of~\citet{ucb}.

Our first contribution is a simple and efficient algorithm
(Algorithm~\ref{alg:mab})
whose regret scales as
\(O\left(T^{1/B} \log(T)\sum_{i:\Delta_i>0}{\Delta_i}^{-1}\right)\) (Theorem~\ref{thm:mainmab}).
In particular, as soon as $B \geq C \log T$ for some absolute constant $C$, 
it matches the optimum regret achievable in the fully sequential setting.
In other words, increasing the number of batches from $C \log T$ to $T$ does not reduce the expected regret by more than a constant multiplicative factor.
\citet[Corollary~2]{batchedneurips2019} show that 
$B = \Omega(\log T / \log \log T)$ batches are \emph{necessary} to achieve
the optimal regret
$O\Big(\log(T) \displaystyle\sum_{i:\Delta_i>0}\Delta_i^{-1}\Big)$. This lower bound implies that our algorithm uses almost the minimum number of batches needed (i.e., $O(\log T)$ versus $\Omega({\log T}/{\log \log T})$) to achieve the optimal regret.

Our algorithm is based on arm elimination:
in each batch, we explore several arms in parallel,
then at the end of the batch, we eliminate those arms that are certainly suboptimal, and then we repeat with a larger batch.


\subsection{Stochastic Linear Bandits}
In the sequential setting of stochastic linear bandits, the best-known upper bound for regret is $O(d\sqrt{T\log(T)})$
\citep[Note~2 in Section~22.1]{torcsababook},
while the best known lower bound is $\Omega(d\sqrt{T})$~\citep{linbanditlowerbound}.
For a finite number of arms $K$,
the best known upper bound is $O(\sqrt{d T \log K})$
\citep{linbanditsfinitearms}.

Our second contribution is the first 
algorithm for batched stochastic linear bandits (Algorithm~\ref{alg:lb}): an efficient algorithm for the case of finitely many arms with regret
$O\left( T^{1/B} \sqrt{d T \log (KT)}\right)$ (Theorem~\ref{thm:lbmain}), nearly matching the $O(\sqrt{dT\log(K)})$ upper bound for the sequential setting as soon as the number of batches is $\Omega(\log T)$.
If there are infinitely many arms, we achieve a regret upper bound of 
$O\left( T^{1/B} \cdot d \sqrt{T \log (T)}\right)$,
which matches the best-known upper bound of $O(d\sqrt{T \log T})$ for the sequential setting as soon as the number of batches is $\Omega(\log T)$.

Our algorithm for batched stochastic linear bandits is also based on arm eliminations. However, it would blow up the regret if we were to pull each arm in each batch the same number of times; instead we use the geometry of the action set to carefully choose a sequence of arms in each batch, based on an approximately optimal G-design, that would give us the information needed to gradually eliminate the suboptimal arms.
The extension to the case of infinitely many arms is achieved via a discretization argument.


\subsection{Adversarial Multi-Armed Bandits}
The optimal regret for adversarial multi-armed bandits in the sequential setting is $\Theta(\sqrt{KT})$,
see~\citet{adversarialminimax}. Our third contribution is to
prove that the best achievable regret of any non-adaptive algorithm for batched adversarial multi-armed bandits is
$
\widetilde{\Theta}
\left(
\sqrt{T \left(K + \frac T B\right)}
\right),
$
where the $\widetilde{\Theta}$ allows for logarithmic factors (Theorem~\ref{thm:abmain}).
That is, we prove an optimal (minimax) regret bound, up to logarithmic factors, for non-adaptive algorithms for batched adversarial multi-armed bandits.

Finally, we prove a lower bound of $\Omega(T/B)$ for the regret of {\em any} algorithm, adaptive or non-adaptive,
for batched adversarial multi-armed bandits
(Theorem~\ref{thm:adversariallowerbound1}).
This shows a large contrast with the stochastic version, since there is a polynomial relation between the number of batches and the regret. In particular, one needs at least $\Omega(\sqrt{T/K})$ batches to achieve the optimum regret of $O(\sqrt {TK})$. 

The upper bound for batched adversarial multi-armed bandits is proved via a reduction to the setting of multi-armed bandits with delays,
while the lower bounds are proved by carefully designing hard reward sequences.


\paragraph{Paper Outline}
In the next section we review prior work on the batched bandits model.
Stochastic multi-armed bandits,
stochastic linear bandits,
and adversarial multi-armed bandits
are studied
in Sections~\ref{sec:batched},
~\ref{sec:linear}, and~\ref{sec:adversarial} respectively.
We conclude with a discussion and directions for further research in Section~\ref{sec:conclusion}.

\section{Related Work}
Sequential bandit problems, in particular multi-armed bandits, have  been studied for almost a century. While we cannot do justice to all the work that has been done, let us highlight a few excellent monographs \citep{Bubeck:Survey12,Slivkinssurvey, torcsababook}. In the rest of this section, we  review the results in the batched setting. 

\citet{improveducb} present an algorithm for stochastic multi-armed bandits based on arm elimination.
Even though their algorithm is presented for the sequential setting, it can be turned into an algorithm for the batched setting for $\Omega(\log T)$ batches. More precisely, they prove that the optimal problem-dependent regret of $O\Big(\log(T)\sum_{i:\Delta_i>0}{\Delta_i}^{-1}\Big)$ is achievable as soon as the number of batches is $B=\Omega(\log T)$. Our results, however, are more general and hold for \emph{any} number of batches $B \in \{1,\dots,T\}$.

Similarly, \citet[Theorem~6]{onlinelearningswitchingcosts} provide an algorithm using $B=O(\log \log T)$ batches that achieves problem-independent regret of $O(\sqrt{KT \log K})$ for stochastic multi-armed bandits. (Their algorithm is also for the sequential version but it can be batched easily.) This rate is minimax optimal for the sequential version up to a $\sqrt{\log T}$ factor.
However, the focus of our paper for stochastic multi-armed bandits is on problem-dependent bounds, which have logarithmic dependence on $T$.

A special case of batched multi-armed bandits was studied by \citet{batchedtwoarms}, where they only consider two arms and provide upper and lower bounds on the regret. 
In particular, denoting the gap between the arms by $\Delta$,
\citet[Theorem~2]{batchedtwoarms} provide a regret bound of 
\[
\mathbf{E}[
\textnormal{Regret}]\leq O\left(\left(\frac{T}{\log T}\right)^{1/B} \frac{\log T}{\Delta}\right)
\]
when $K=2$.
In contrast, we consider the more general setting of $K\geq2$ and 
we give an algorithm
for \emph{any} number of arms $K$
satisfying
\[
\mathbf{E}[
\textnormal{Regret}]\leq O\left({T}^{1/B}  \sum_{i:\Delta_i>0}\frac{\log(T)}{\Delta_i}
\right).
\]

The closest previous work is 
\citet{batchedneurips2019}, who study the batched stochastic multi-armed bandit in full generality and prove regret bounds.
In particular, \citet[Theorem~4]{batchedneurips2019} provide a non-adaptive algorithm with
\[
\mathbf{E}[
\textnormal{Regret}] \leq 
O \left ( \frac{ K \log(K) \log (KT) T^{1/B}}
{\displaystyle\min_{i:\Delta_i>0} \Delta_i}
\right).
\]
Our Theorem~\ref{thm:mainmab} 
shaves off a factor of $\log (K)$ and also improves the factor $\frac{K}{\displaystyle\min \Delta_i}$ to $\sum_{i}\frac{1}{\Delta_i}$.
This latter improvement may be as large as a multiplicative factor of $K$ in some instances, e.g., if
$\mu_1=1,\mu_2=1-1/K,\mu_3=\dots=\mu_K=0$.
We achieve these improvements by adapting the batch sizes based on the previous outcomes as opposed to predetermined batch sizes.

The other bandit problems we study, namely, stochastic linear bandits and adversarial multi-armed bandits, have not been studied in the batch mode setting prior to our work.

Optimization in batch mode  has also been studied in other machine learning settings where information-parallelization is very effective. Examples include best arm identification \citep{batchedbestarmidentification,batchedbestarmidentification2}, bandit Gaussian processes \cite{desautels2014parallelizing, kathuria2016batched, contal2013parallel}, submodular maximization \cite{balkanski2018adaptive,fahrbach2019submodular, chen2019unconstrained}, stochastic sequential optimization \cite{esfandiari2019adaptivity, agarwal2019stochastic}, active learning \cite{hoi2006batch, chen2013near}, and reinforcement learning \cite{ernst2005tree}, to name a few.




\section{Batched Stochastic Multi-Armed Bandits}
\label{sec:batched}
\subsection{The Algorithm}
The algorithm works by gradually eliminating suboptimal arms.
Let $\delta \coloneqq 1/(2K T B)$ and $q\coloneqq  T^{1/B} $, and define $c_i \coloneqq \lfloor q^1\rfloor + \dots + \lfloor q^i\rfloor$.
Note that $c_B \geq T$.
Initially, all arms are `active.'
In each batch $i=1,2,\dots$, except for the last batch, each active arm is pulled $\lfloor q^i \rfloor$ times.
Then, after the rewards of this batch are observed, the mean of each active arm is estimated as the average reward received from its pulls. An arm is then eliminated if its estimated mean is smaller, by at least $\sqrt {{2 \ln(1/\delta)}/{c_i}}$, than the estimated mean of another active arm.
The last batch is special: if we have used $i-1$ batches so far and the number of active arms times $\lfloor q^i \rfloor$ exceeds the number of remaining rounds, the size of the next batch equals the number of remaining rounds, and in this last batch we pull the active arm with the largest empirical mean.
See Algorithm~\ref{alg:mab} for the pseudocode.

\begin{algorithm}
   \caption{Batched Arm Elimination for Stochastic Multi-Armed Bandits}
   \label{alg:mab}
\begin{algorithmic}
   \STATE {\bfseries Input:} number of arms $K$, time horizon $T$, number of batches $B$
   \STATE $q \longleftarrow T^{1/B}$
   \STATE $\acal \longleftarrow [K]$ \COMMENT{active arms}
   \STATE $\muhat_a \longleftarrow 0$ for all $a\in\acal$ \COMMENT{estimated means}
   \FOR{$i=1$ {\bfseries to} $B-1$}
   \IF{$\lfloor q^i \rfloor \times |\acal|>$ remaining rounds}
   \STATE {\bfseries break}
   \ENDIF
   \STATE {In the $i$th batch, play each arm $a\in\acal$ for $\lfloor q^i \rfloor$ times} 
   \STATE{Update $\muhat_a$ for all $a\in\acal$}
   \STATE {$c_i\longleftarrow \sum_{j=1}^{i} \lfloor q^j \rfloor$}
   \FOR{$a\in \acal$}
   \IF{$\muhat_a < \max_{a\in \acal}\muhat_a-\sqrt{2\ln(2KTB)/c_i}$}
   \STATE{Remove $a$ from $\acal$}
   \ENDIF
   \ENDFOR
   \ENDFOR
   \STATE{In the last batch, play $\argmax_{a\in\acal}\muhat_a$}
\end{algorithmic}
\end{algorithm}

\subsection{Analysis}

\begin{theorem}
\label{thm:mainmab}
The expected regret of Algorithm~\ref{alg:mab} for batched stochastic multi-armed bandits is bounded by 
\[
\mathbf{E}[
\textnormal{Regret}] \leq 9  T^{1/B} 
\ln(2KTB)
\sum_{j:\Delta_j>0}\frac{1}{\Delta_j}.
\]
\end{theorem}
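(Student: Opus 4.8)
The plan is to run the standard arm-elimination analysis, with the extra twist that the number of pulls of an arm can overshoot its ``safe'' level by a factor of $q=T^{1/B}$ because of the geometrically growing batch sizes; this factor is exactly what produces the $T^{1/B}$ in the bound. Write $\delta=1/(2KTB)$ and, for an arm that has been pulled $c_i$ times, set $w_i\coloneqq\sqrt{\ln(1/\delta)/(2c_i)}$, so that the elimination threshold $\sqrt{2\ln(1/\delta)/c_i}$ used in Algorithm~\ref{alg:mab} is exactly $2w_i$. By Hoeffding's inequality, an empirical mean $\muhat_a$ formed from $c_i$ independent samples in $[0,1]$ satisfies $|\muhat_a-\mu_a|\le w_i$ with probability at least $1-2\delta$. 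Since across the whole run there are at most $KB$ such estimates (at most one per arm per batch), a union bound shows that the \emph{good event} $\mathcal{G}$, on which $|\muhat_a-\mu_a|\le w_i$ holds for every arm $a$ and every batch $i$ in which $a$ is active, fails with probability at most $2KB\delta=1/T$. On the complementary bad event the regret is at most $T$, contributing at most $1$ to the expected regret, which is absorbed into the final constant. I would therefore condition on $\mathcal{G}$ and decompose the regret as $\sum_{j:\Delta_j>0}\Delta_j n_j$, where $n_j$ is the total number of pulls of arm $j$.

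On $\mathcal{G}$ I would establish two structural facts by the usual confidence-interval comparison. First, the optimal arm is never eliminated: if it is active in batch $i$ then its empirical mean is within $w_i$ of the true optimum, hence within $2w_i$ of the empirical maximum, so it clears the threshold. Second, a suboptimal arm $j$ is eliminated as soon as the intervals separate, i.e.\ as soon as $4w_i<\Delta_j$, which rearranges to $c_i>8\ln(1/\delta)/\Delta_j^2$; indeed on $\mathcal{G}$ the empirical gap between $j$ and the (still active) optimal arm is at least $\Delta_j-2w_i$, which exceeds the threshold $2w_i$ once $\Delta_j>4w_i$. Contrapositively, if arm $j$ is still active \emph{entering} batch $i$, then it survived batch $i-1$, so $c_{i-1}\le 8\ln(1/\delta)/\Delta_j^2$.

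Next I would bound $n_j$ for a suboptimal arm $j$ eliminated in a regular batch $i_j$. Its total number of pulls is $n_j=c_{i_j}=c_{i_j-1}+\lfloor q^{i_j}\rfloor$. The first summand is the safe level $c_{i_j-1}\le 8\ln(1/\delta)/\Delta_j^2$. For the second, the key step is $\lfloor q^{i_j}\rfloor\le q\,q^{i_j-1}\le q\,(c_{i_j-1}+1)$, which follows from $q^{i_j-1}<\lfloor q^{i_j-1}\rfloor+1\le c_{i_j-1}+1$; this is precisely where the multiplicative factor $q=T^{1/B}$ enters. Combining, $n_j\le(1+q)c_{i_j-1}+q$, so $\Delta_j n_j\le 8(1+q)\ln(1/\delta)/\Delta_j+\Delta_j q$, and using $\Delta_j^2\le 1\le\ln(1/\delta)$ to absorb the last term one obtains $\Delta_j n_j\le O\big(q\ln(1/\delta)/\Delta_j\big)$. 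Summing over $j$, with careful tracking of these constants and of the floors (and of the edge case $i_j=1$), yields the factor $9\,T^{1/B}\ln(2KTB)\sum_j \Delta_j^{-1}$.

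The remaining and most delicate piece is the special final batch, in which the algorithm abandons round-robin and plays only $a^{\star}\coloneqq\argmax_{a\in\acal}\muhat_a$ for all $R$ remaining rounds. On $\mathcal{G}$ one has $\Delta_{a^{\star}}\le 2w_{i-1}$ (the empirical maximizer is near-optimal), so the $c_{i-1}$ pulls $a^{\star}$ already received are safe as above; the difficulty is that $R$ can be as large as $|\acal|\,\lfloor q^i\rfloor$, so $\Delta_{a^{\star}}R$ is not immediately of the right order. The hard part of the proof is thus to control this last-batch regret: I would use the break condition $R<|\acal|\,\lfloor q^i\rfloor$ together with the fact that every surviving arm has gap at most $4w_{i-1}$ (hence a large $\Delta_a^{-1}$) to charge $\Delta_{a^{\star}}R$ against the sum $\sum_{a\in\acal}\Delta_a^{-1}$ of the arms still active at the break, which is exactly the kind of term already present on the right-hand side of the bound. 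Verifying that this charging closes with the stated constant, while simultaneously handling the floor functions, is the main bookkeeping obstacle; the probabilistic content is entirely in the single Hoeffding-plus-union-bound step of the first paragraph.
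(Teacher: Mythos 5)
Your core analysis coincides with the paper's own proof essentially line for line: the same $\delta=1/(2KTB)$, the same Hoeffding-plus-union-bound good event whose failure contributes $T\cdot(1/T)=1$ to the expected regret, the same decomposition $\sum_{j:\Delta_j>0}\Delta_j\,\E[T_j]$, the same two structural facts (the optimal arm survives; a surviving arm $j$ has $c_{i-1}\leq 8\ln(1/\delta)/\Delta_j^2$), and the same mechanism by which $q=T^{1/B}$ enters, namely $\lfloor q^{i}\rfloor\leq q(c_{i-1}+1)$. Up to the final batch the two arguments are the same, and your floor bookkeeping is in fact slightly more careful than the paper's, which treats $c_{i+1}=q+qc_i$ as an identity.

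The real divergence is the final batch, and here your instincts are better than the published proof. The paper disposes of this case by asserting $T_j\leq c_{i+1}$ for the arm whose last active batch is $i+1$; that is valid when all $B-1$ regular batches complete (then the remaining rounds are at most $T-c_{B-1}\leq c_B-c_{B-1}=\lfloor q^B\rfloor$), but it \emph{fails} in the break case, where the last batch can last up to $|\mathcal{A}|\,\lfloor q^{i+1}\rfloor-1$ rounds --- a factor $|\mathcal{A}|$ beyond what $c_{i+1}$ allows (e.g.\ $T=10^6$, $B=3$, $K=200$: the break occurs after one regular batch and the empirical-best arm is played $980{,}000$ times, while $c_2=10{,}100$). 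So the difficulty you flag is a genuine gap in the paper, not an artifact of your approach, and your proposed repair is the correct one: on the good event the played arm has gap at most $2w_{i-1}$, each surviving suboptimal arm has $1/\Delta_a\geq 1/(4w_{i-1})$, and $R<|\mathcal{A}|\lfloor q^{i}\rfloor\leq |\mathcal{A}|q(c_{i-1}+1)$, which together give a last-batch regret of $O\left(q\,|\mathcal{A}|\sqrt{c_{i-1}\ln(1/\delta)}\right)=O\left(q\ln(1/\delta)\sum_{a}1/\Delta_a\right)$. The one step of yours that does not survive scrutiny is the claim that this charging ``closes with the stated constant'': carried out, the charge is roughly $8q\cdot|\mathcal{A}|/(|\mathcal{A}|-1)$ per surviving arm \emph{on top of} that arm's own contribution of roughly $8(1+q)$, so the completed argument yields a constant in the range $16$--$25$, not $9$. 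Thus your route proves the theorem in the form $O\left(T^{1/B}\ln(2KTB)\sum_{j:\Delta_j>0}\Delta_j^{-1}\right)$ but not with the literal constant $9$; in fairness, the paper's ``$9$'' itself only emerges from accounting that ignores the floors, the additive bad-event term, and the break case altogether.
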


\begin{proof}
We recall  Hoeffding's inequality.

\paragraph{ 
\protect{\citet[Theorem~2]{hoeffding}}.}
Suppose $X_1,\dots,X_n$ are independent, identically distributed random variables supported on $[0,1]$. Then, for any $t\geq0$,
\[
\mathbf{Pr} \left[\left|\frac 1 n \sum_{i=1}^{n} X_i - \mathbf{E}X_1 > t\right|\right] < 2 \exp (-2nt^2).
\]

For an active arm at the end of some batch $i$, we say its estimation is `correct' if the estimation of its mean is within
$\sqrt{{\ln(1/\delta)}/{2c_i}}$
of its actual mean.
Since each active arm is pulled $c_i$ times by the end of batch $i$, by Hoeffding's inequality, 
the estimation of any active arm at the end of any  batch (except possibly the last batch) is correct with probability at least $1-2\delta$.
Note that there are $K$ arms and at most $B$ batches (since $c_B\geq T$). Hence, by the union bound, the probability that the estimation is incorrect for some arm at the end of some batch is bounded by $K B \times 2\delta = 1/T$.
Whenever some estimation is incorrect, we upper bound the regret by $T$. Hence, 
the total expected regret can be bounded as
\[\mathbf{E}[
\textnormal{Regret}]
\leq
\frac 1 T \times T + \mathbf{E}[
\textnormal{Regret} | \textnormal{estimations are correct}].
\]
So, from now on, we assume that all estimations are correct, which implies that each {\em gap} is correctly estimated within an additive factor  of $\sqrt{{2\ln(1/\delta)}/{c_i}}$. Thus, the best arm is never eliminated.
We can then write the expected regret as
\begin{align*}
\mathbf{E}&[
\textnormal{Regret} | \textnormal{estimations are correct}] \\&= \sum_{j:\Delta_j>0} \Delta_j \mathbf{E} [T_j | \textnormal{estimations are correct}], 
\end{align*}
where $T_j$ denotes the total number of pulls of arm $j$.
Let us now fix some (suboptimal) arm $j$ with $\Delta_j>0$ and upper bound $T_j$.
Suppose batch $i+1$ is the last batch that arm $j$ was active.
Since this arm is not eliminated at the end of batch $i$, and the estimations are correct, we have
$\Delta_j \leq 2 \sqrt{2 \ln(1/\delta)/c_i}$ deterministically, which means
$c_i \leq 8\ln(1/\delta)\Delta_j^{-2}$.
The total pulls of this arm is thus
\[
T_j \leq c_{i+1} = q + q c_i \leq q + \frac{8 q \ln(1/\delta)}{ \Delta_j^2}, 
\]
whence,
\begin{align*}
\mathbf{E} [
\textnormal{Regret} | & \textnormal{estimations are correct}]
 \\&\leq \sum_{j:\Delta_j>0} \left\{
 q \Delta_j + \frac{8q\ln(1/\delta)}{\Delta_j}
 \right\}\\&
 \leq 
 q (K-1) + 8 q \ln(1/\delta)
 \sum_{j:\Delta_j>0}\frac{1}{\Delta_j}.
\end{align*}
Substituting the values of $q$ and $\delta$ completes the proof.
\end{proof}

\begin{remark}
Instead of setting $q=T^{1/B}$, one can choose any value for $q$ that satisfies $\sum_{i=1}^{B}\lfloor q^i \rfloor \geq T$ 
and the analysis would not change, while the $T^{1/B}$ in the regret bound can be replaced with $q$.
\end{remark}




\section{Batched Stochastic Linear Bandits}
\label{sec:linear}


\subsection{The Algorithm}
The algorithm is based on arm elimination, as in the multi-armed case.
Here is the key lemma, which follows from the results in~\citet[Chapter~21]{torcsababook}.
Note that we may assume without loss of generality that the action set spans $\R^d$, otherwise we may work in the space spanned by the action set.

\begin{lemma}\label{lem:goptimal}
For any finite action set $\acal$ that spans $\R^d$ and any $\delta,\eps>0$, we can efficiently find a \emph{multi-set} of $\Theta(d \log(1/\delta)/\eps^2)$ actions (possibly with repetitions) such that they span $\R^d$ and if we perform them in a batched stochastic linear bandits setting and let $\thetahat$ be the least-squares estimate for $\thetastar$, then, for any $a\in\acal$, with probability at least $1-\delta$ we have
$\left|\dop{a}{\thetahat-\thetastar}\right| \leq \eps$.
\end{lemma}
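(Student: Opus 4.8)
The plan is to build the required multi-set from an (approximately) G-optimal design on $\acal$, allocate the budget proportionally to that design, and then bound the estimation error for each fixed arm via a one-dimensional subgaussian tail inequality.

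First I would recall the Kiefer--Wolfowitz machinery underlying \citet[Chapter~21]{torcsababook}. For a distribution $\pi$ supported on $\acal$, write $V(\pi)=\sum_{a\in\acal}\pi(a)\,a a\transpose$ and define the G-optimality objective $g(\pi)=\max_{a\in\acal}\|a\|_{V(\pi)^{-1}}^2$, where $\|a\|_{M}^2=a\transpose M a$. Since $\acal$ spans $\R^d$, there is a design $\pi^\star$ with $V(\pi^\star)$ invertible, and the Kiefer--Wolfowitz theorem gives an optimal $\pi^\star$ with $g(\pi^\star)=d$ supported on at most $d(d+1)/2$ actions; crucially, a design $\pi$ with $g(\pi)\le 2d$ and small support can be computed \emph{efficiently} (e.g.\ by the Frank--Wolfe method). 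This is exactly where the factor $d$ in the claimed sample complexity $\Theta(d\log(1/\delta)/\eps^2)$ enters.

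Next I would fix the budget $n=\lceil c\,d\ln(2/\delta)/\eps^2\rceil$ for a suitable absolute constant $c$, and form the multi-set by including each support action $a$ roughly $n\pi(a)$ times. Let $a_1,\dots,a_n$ be the resulting pulls and $V=\sum_{t=1}^n a_t a_t\transpose$ the design matrix; with the rounding handled so that $V\succeq (n/2)\,V(\pi)$, the matrix $V$ is invertible, which also establishes the ``span $\R^d$'' clause of the statement. Writing the observed rewards as $r_t=\dop{a_t}{\thetastar}+\mu_t$ with independent $1$-subgaussian noise $\mu_t$, the least-squares estimate is $\thetahat=V^{-1}\sum_{t}a_t r_t$, so that $\thetahat-\thetastar=V^{-1}\sum_{t}a_t\mu_t$.

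The core computation is then the per-arm error bound. For a fixed $a\in\acal$ we have $\dop{a}{\thetahat-\thetastar}=\sum_{t}(a\transpose V^{-1}a_t)\,\mu_t$, a weighted sum of independent $1$-subgaussian variables, hence itself subgaussian with variance proxy $\sum_t (a\transpose V^{-1}a_t)^2 = a\transpose V^{-1}\big(\sum_t a_t a_t\transpose\big)V^{-1}a = \|a\|_{V^{-1}}^2$. Using $V\succeq(n/2)V(\pi)$ and $g(\pi)\le 2d$ yields $\|a\|_{V^{-1}}^2\le (2/n)\|a\|_{V(\pi)^{-1}}^2\le 4d/n$, so the standard subgaussian tail bound gives $\p{\left|\dop{a}{\thetahat-\thetastar}\right|>\eps}\le 2\exp\!\big(-\eps^2 n/(8d)\big)$, which is at most $\delta$ once $c$ is chosen large enough. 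The step demanding the most care is the rounding in the allocation: a naive choice of $\lfloor n\pi(a)\rfloor$ pulls could in principle make $V$ singular or distort the quadratic forms, so I would either round up and absorb the modest inflation of $n$ into the constant, or argue directly that the integer allocation still satisfies $V\succeq(n/2)V(\pi)$. Everything else is a routine pairing of Kiefer--Wolfowitz with subgaussian concentration.
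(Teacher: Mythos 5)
Your proof is correct and is essentially the argument the paper relies on: the paper does not prove Lemma~\ref{lem:goptimal} itself but defers to \citet[Chapter~21]{torcsababook}, and the content of that citation is exactly your route --- an efficiently computable approximate G-optimal design with $g(\pi)\le 2d$ (Kiefer--Wolfowitz plus Frank--Wolfe), ceiling-rounded allocation so that $V\succeq n V(\pi)$, and the fixed-design computation $\dop{a}{\thetahat-\thetastar}=\sum_t (a\transpose V^{-1}a_t)\mu_t$ with variance proxy $\|a\|_{V^{-1}}^2\le 4d/n$, finished by a subgaussian tail bound. The only caveat, which the paper shares (its remark gives size $\frac{d^2+d}{2}+\frac{d\ln(2/\delta)}{\eps^2}$), is that the additive support-size term from rounding is absorbed into the $\Theta(d\log(1/\delta)/\eps^2)$ only when $\log(1/\delta)/\eps^2$ dominates, which holds in the regime where the lemma is applied ($\delta=1/KT^2$).
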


\begin{remark}
If the performed actions are $a_1,\dots,a_n$ and the received rewards are $r_1,\dots,r_n$,
then the least squares estimate for $\thetastar$ is
\[
\thetahat \coloneqq
\left(\sum_{i=1}^{n} a_i a_i\transpose \right)^{-1}
\left(\sum_{i=1}^{n} r_i a_i\right).
\]
\end{remark}

\begin{remark}
It is known that a multi-set of actions with the guarantee of Lemma~\ref{lem:goptimal} exists and has size at most 
$\frac{d^2+d}{2}+\frac{d\ln(2/\delta)}{\eps^2}$;
it can be defined based on a so-called G-optimal design for $\acal$, see~\citet[equation (21.3)]{torcsababook}, or geometrically, the minimum volume ellipsoid containing $\acal$.
However, it is not clear whether we can find the G-optimal design {\em efficiently};
however, it is known that approximate G-optimal designs of size $\Theta(d \log(1/\delta)/\eps^2)$ can be found efficiently, see~\citet[Note~3 in Section~22.1]{torcsababook}.
\end{remark}

Let $B$ denote the number of batches, and let $c$ and $C$ be the constants hidden in the $\Theta$ notation in Lemma~\ref{lem:goptimal}; namely, the size of the multi-set is in
$[c d \log(1/\delta)/\eps^2,
C d \log(1/\delta)/\eps^2]$.
Define $q \coloneqq (T/c)^{1/B}$ and $\eps_i \coloneqq \sqrt{d \log(KT^2)/q^i}$.

We now describe the algorithm. Initially, all arms are active.
In each batch $i=1,2,\dots$, except for the last batch, we compute the multi-set given by Lemma~\ref{lem:goptimal}, with $\acal$ being the set of active arms, $\delta\coloneqq1/KT^2$ and $\eps=\eps_i$; we then perform the actions given by the lemma, compute $\thetahat_i$, and eliminate any arm $a$ with
\begin{equation}
\label{eq:eliminationrule}
\dop{a}{\thetahat_i}
< \max_{\textnormal{active } a}\dop{a}{\thetahat_i}
- 2\epsilon_i.
\end{equation}
In the last batch, we pull the  active arm with the largest dot product with the last estimated $\thetahat$.
The pseudocode can be found in Algorithm~\ref{alg:lb}.

\begin{algorithm}
   \caption{Batched Arm Elimination for Stochastic Linear Bandits With Finitely Many Arms}
   \label{alg:lb}
\begin{algorithmic}
   \STATE {\bfseries Input:} action set $\acal \subseteq \R^d$, time horizon $T$, number of batches $B$
   \STATE $q \longleftarrow (T/c)^{1/B}$
   \FOR{$i=1$ {\bfseries to} $B-1$}
   \STATE{$\eps_i\longleftarrow \sqrt{d\log(KT^2)/q^i}$}
   \STATE{$a_1,\dots,a_n \longleftarrow$ multi-set given by Lemma~\ref{lem:goptimal} with parameters $\delta=1/KT^2$ and $\eps=\eps_i$}
   \IF{$n>$ remaining rounds}
   \STATE {\bfseries break}
   \ENDIF
   \STATE {In the $i$th batch, play the arms $a_1,\dots,a_n$ and receive rewards $r_1,\dots,r_n$} 
   \STATE{$\thetahat \longleftarrow
\left(\sum_{i=1}^{n} a_i a_i\transpose \right)^{-1}
\left(\sum_{i=1}^{n} r_i a_i\right)$}
   \FOR{$a\in \acal$}
   \IF{$\dop{a}{\thetahat}
< \max_{a\in\acal}\dop{a}{\thetahat}
- 2\epsilon_i$}
   \STATE{Remove $a$ from $\acal$}
   \ENDIF
   \ENDFOR
   \ENDFOR
   \STATE{In the last batch, play $\argmax_{a\in\acal}\dop{a}{\thetahat}$}
\end{algorithmic}
\end{algorithm}

\subsection{Analysis}

\begin{theorem}
\label{thm:lbmain}
The regret of Algorithm~\ref{alg:lb} is at most
$$\mathbf{E}[
\textnormal{Regret}] \leq O\left( T^{1/B} \sqrt{d T \log (KT)}\right)$$
for the batched stochastic linear bandit problem with $K$ arms.
\end{theorem}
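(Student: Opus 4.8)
The plan is to follow the same arm-elimination template as the proof of Theorem~\ref{thm:mainmab}, but with the per-arm Hoeffding guarantee replaced by the G-design accuracy of Lemma~\ref{lem:goptimal}. First I would define the good event $\mathcal{E}$ under which every least-squares estimate is accurate: for every batch $i$ and every arm $a$ active during batch $i$, $\left|\dop{a}{\thetahat-\thetastar}\right|\le\eps_i$, where $\thetahat$ is the estimate computed in that batch. Lemma~\ref{lem:goptimal} guarantees this for a fixed arm with probability at least $1-\delta$ with $\delta=1/(KT^2)$; since there are at most $K$ arms and at most $B\le T$ batches, a union bound gives $\P\{\mathcal{E}^c\}\le KB\delta\le 1/T$. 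Because $\|a\|_2\le1$ and $\|\thetastar\|_2\le1$, the per-round regret is at most $2$, so the contribution of $\mathcal{E}^c$ to the expected regret is at most $2T\cdot(1/T)=O(1)$, and it remains to bound the regret on $\mathcal{E}$.

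Next I would establish the two structural facts underlying elimination, both on $\mathcal{E}$. Writing $\astar$ for an optimal arm, the first is that $\astar$ is never eliminated: if $\astar$ is active in batch $i$ then for every active $b$ we have $\dop{b}{\thetahat}\le\dop{b}{\thetastar}+\eps_i\le\dop{\astar}{\thetastar}+\eps_i$ and $\dop{\astar}{\thetahat}\ge\dop{\astar}{\thetastar}-\eps_i$, so $\dop{\astar}{\thetahat}\ge\max_{b}\dop{b}{\thetahat}-2\eps_i$ and the rule~\eqref{eq:eliminationrule} does not remove $\astar$; induction keeps it active throughout. The second fact is that any arm $a$ still active at the start of batch $i$ (for $i\ge2$) has a small gap: since $a$ survived the elimination at the end of batch $i-1$ we have $\dop{a}{\thetahat}\ge\dop{\astar}{\thetahat}-2\eps_{i-1}$, and combining this with the accuracy bounds at batch $i-1$ yields $\dop{\astar}{\thetastar}-\dop{a}{\thetastar}\le 4\eps_{i-1}$.

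With these facts I would decompose the regret batch by batch. In batch $i$ the algorithm performs $n_i$ pulls, all of arms active at the start of the batch, so the regret incurred in batch $i$ is at most $n_i\cdot\max_{a\text{ active}}\bigl(\dop{\astar}{\thetastar}-\dop{a}{\thetastar}\bigr)\le 4 n_i\eps_{i-1}$ for $i\ge2$ (batch $1$ and the final best-arm batch are bounded separately and are lower order). Since the multiset size satisfies $n_i\le C\,d\log(1/\delta)/\eps_i^2$, and with $\eps_i=\sqrt{d\log(KT^2)/q^i}$ this equals $\Theta(q^i)$, the term $n_i\eps_{i-1}$ scales like $\sqrt{d\log(KT^2)}\,q^{(i+1)/2}$. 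Summing the resulting geometric series over the at most $B$ executed batches collapses it to a constant multiple of its largest term; using that every executed design has size at most $T$ (so $q^{i}=O(T)$, whence $q^{(i+1)/2}=O(\sqrt{qT})$) together with $q=\Theta(T^{1/B})$, this largest term is of order $\sqrt{d\log(KT)}\cdot q\cdot\sqrt{T}$, which gives the claimed $O\left(T^{1/B}\sqrt{dT\log(KT)}\right)$ bound.

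I expect the geometric summation to be the main obstacle: one has to verify that $\sum_i q^{(i+1)/2}$ is dominated by its last term (this is where a factor $\sqrt{q}/(\sqrt{q}-1)$ enters), so the argument is cleanest when $q$ is bounded away from $1$, i.e. $B=O(\log T)$—precisely the regime in which $T^{1/B}$ is a genuine constant and the bound matches the sequential rate. A secondary care point is the bookkeeping for the break condition of Algorithm~\ref{alg:lb} and the residual best-arm batch, to confirm that the number of executed batches is at most $B$ and that the final batch contributes only lower-order regret.
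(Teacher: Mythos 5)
Your proposal follows essentially the same argument as the paper's proof: the same good event from Lemma~\ref{lem:goptimal} with $\delta=1/KT^2$ and a union bound, the same two triangle-inequality facts (the optimal arm is never eliminated; every survivor of batch $i-1$ has gap at most $4\eps_{i-1}$), the same per-batch regret bound $4n_i\eps_{i-1}$ with $n_i=\Theta(q^i)$, and the same geometric summation collapsing to $O\left(q\sqrt{dT\log(KT)}\right)$. The caveat you flag---that the geometric sum is dominated by its last term only when $q$ is bounded away from $1$---is equally present (and silently elided) in the paper's own proof, so your attempt matches the paper's argument in both substance and rigor.
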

\begin{proof}
Let $n_i$ denote the size of batch $i$. Then, by Lemma~\ref{lem:goptimal} we have
$n_i \in [cq^i, Cq^i]$.
Since $q = (T/c)^{1/B}$, the number of batches is not more than $B$.

We define the following \emph{good event}:
``for any arm $a$ that is active in the beginning of batch $i$, 
at the end of this batch we have 
$\left|\dop{a}{\thetahat_i-\thetastar}\right| \leq \eps_i$.''
Since there are $K$ arms and at most $T$ batches and $\delta=1/KT^2$,
by Lemma~\ref{lem:goptimal} and the union bound the good event happens with probability at least $1-1/T$. We assume it happens in the following, for if it does not happen, we can upper bound the regret by $T$, adding just 1 to the final regret bound, as in the proof of Theorem~\ref{thm:mainmab}.

Since the good event happens, and because of our elimination rule~\eqref{eq:eliminationrule}, the triangle inequality shows the optimal arm will not be eliminated: 
let $\astar$ denote the optimal arm;
for any suboptimal arm $a$,
\begin{align*}
\dop{a}{\thetahat_i}-
\dop{\astar}{\thetahat_i}
&\leq
(\dop{a}{\thetastar}+\eps_i)-
(\dop{\astar}{\thetastar} - \eps_i)
 \\&< 2\eps_i.
\end{align*}

Next, fix a suboptimal arm $a$, and let $\Delta\coloneqq \dop{a^{\star}-a}{\thetastar}$ denote its gap.
Let $i$ be the smallest positive integer such that $\eps_i < \Delta/4$.
Then, since the good event happens, 
and because of our elimination rule \eqref{eq:eliminationrule}, the triangle inequality shows this arm will be eliminated by the end of batch $i$:
\begin{align*}
\dop{\astar}{\thetahat_i}-
\dop{a}{\thetahat_i}
& \geq
(\dop{\astar}{\thetastar} - \eps_i)
-
(\dop{a}{\thetastar}+\eps_i)
 \\& = \Delta - 2\eps_i > 2\eps_i.
\end{align*}
The above claims show that during batch $i$, any active arm has gap at most $4\eps_{i-1}$, so the instantaneous regret in any round is not more than $4\eps_{i-1}$, whence the expected regret of the algorithm conditional on the good event can be bounded by:
\begin{align*}
\sum_{i=1}^{B}4 n_i \eps_{i-1}
&\leq
4C
\sum_{i=1}^{B} q^i 
\sqrt{d \log(KT^2)/q^{i-1}}
\\&\leq
6Cq\sqrt{d\log(KT)}
\sum_{i=0}^{B-1}q^{i/2}
\\&=
O\left(q\sqrt{d\log(KT)}
q^{B/2}\right)
\\&= O\left( q \sqrt{d T \log (KT)}\right),
\end{align*}
completing the proof.
\end{proof}






\subsection{Infinite Action Sets}
In this section we prove a regret bound of $O\left( T^{1/B} \cdot d \sqrt{T \log T}\right)$ for batched stochastic linear bandits even if the action set $\acal$ has infinite cardinality.
An \emph{$\epsilon$-net} for $\acal$ is a set $\acal' \subseteq \acal$ such that for any $a\in\acal$ there exists some $a'\in\acal'$ with $\|a-a'\|_2\leq\epsilon$.
Since $\acal$ is a subset of the unit Euclidean ball in $\R^d$, it has a $\frac{1}{T}$-net $\acal'$ of cardinality not more than $(3T)^d$, see, e.g., \citet[Corollary~4.2.13]{vershyninbook}.

We execute Algorithm~\ref{alg:lb} using the finite action set $\acal'$. Let $a_1,\dots,a_T$ denote the algorithms' actions. Then, by Theorem~\ref{thm:lbmain}, we have
\begin{align*}
&T \sup_{a\in\acal'}\dop{a}{\thetastar} -\E\left[ \sum_{t=1}^{T} \dop{a_t}{\thetastar}\right]\\
& = O\left(T^{1/B} \sqrt{dT\log(|\acal'|T)}\right)\\
& = O\left( T^{1/B} \cdot d \sqrt{T \log T}  \right).
\end{align*}
On the other hand, since $\acal'$ is a $\frac{1}{T}$-net for $\acal$, for any $a\in\acal$ there exists some $a'\in\acal'$ with $\|a-a'\|_2\leq\frac{1}{T}$, which implies $$\dop{a}{\thetastar}-\dop{a'}{\thetastar}\leq\|a-a'\|_2 \cdot \|\thetastar\|_2\leq\frac{1}{T},$$
and in particular,
\[
\sup_{a\in\acal}\dop{a}{\thetastar}-
\sup_{a\in\acal'}\dop{a}{\thetastar}
\leq\frac{1}{T},
\]
and thus,
\begin{align*}
\E[\textnormal{Regret}]
& = T \sup_{a\in\acal}\dop{a}{\thetastar} -\E\left[ \sum_{t=1}^{T} \dop{a_t}{\thetastar}\right]\\
& =
\left(T \sup_{a\in\acal}\dop{a}{\thetastar}
-
T \sup_{a\in\acal'}\dop{a}{\thetastar}\right) \\
&+
\left(T \sup_{a\in\acal'}\dop{a}{\thetastar} -\E\left[ \sum_{t=1}^{T} \dop{a_t}{\thetastar}\right]\right)\\
&\leq 1 + O\left( T^{1/B} \cdot d \sqrt{T \log T}  \right),
\end{align*}
proving the claim.


\section{Batched Adversarial Multi-Armed Bandits}
\label{sec:adversarial}

We start by proving a regret upper bound.

\begin{lemma}
\label{lem:adversarialupperbound}
There is a non-adaptive algorithm for batched adversarial multi-armed bandits with regret bounded by
\[
\mathbf{E}[
\textnormal{Regret}] \leq O\left(\sqrt{\left(K+\frac T B\right)T\log(K)}\right).
\]
\end{lemma}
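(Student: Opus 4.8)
The plan is to reduce the batched problem to adversarial multi-armed bandits with \emph{delayed} feedback. I would fix equal batch sizes of $m \coloneqq \lceil T/B\rceil$, so that the batch structure is non-adaptive and at most $B$ batches are used. The key observation is that playing in batches of size $m$ is, from the learner's point of view, equivalent to playing the fully sequential game but with every reward revealed only $m$ rounds later: at the moment the learner must commit to an action in batch $i$, it has observed exactly the rewards of batches $1,\dots,i-1$ and nothing from batch $i$, which is precisely the information available under a uniform delay of $m = \Theta(T/B)$.

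First I would specify the algorithm as a delayed version of EXP3. At the start of batch $i$, using only the importance-weighted loss estimates accumulated from batches $1,\dots,i-1$, the algorithm maintains a distribution $P_i$ over $[K]$ via the usual exponential-weights update, and then samples all $m$ actions of batch $i$ independently from $P_i$. Because no feedback arrives during a batch, the sampling distribution cannot change within a batch, so committing to these $m$ i.i.d.\ draws in advance is exactly what the batch model permits; this is what makes the reduction faithful and keeps the algorithm non-adaptive. It also covers the oblivious (fixed) reward sequence that the adversarial batched model allows.

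Then I would invoke the known regret guarantee for adversarial bandits under a uniform delay $d$, namely $\E[\textnormal{Regret}] = O(\sqrt{(K+d)\,T\log K})$, obtained by tuning the EXP3 learning rate to account for the delay: the delay inflates the effective variance term in the standard EXP3 analysis, adding a $d$ inside the square root alongside the usual $K$. Substituting $d = m = \Theta(T/B)$ yields $\E[\textnormal{Regret}] = O(\sqrt{(K + T/B)\,T\log K})$, as claimed.

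The main obstacle is the delayed EXP3 analysis itself: one must control the bias and drift introduced because each action in a batch is chosen from a distribution that ignores the $\le m$ most recent, still-hidden rewards. Concretely, one bounds the discrepancy between the ``stale'' distribution actually used and the up-to-date one, showing its cumulative effect contributes an $O(\eta\, d\, T)$ term that, after optimizing $\eta$ against $\tfrac{\log K}{\eta} + \eta(K+d)T$, produces the $\sqrt{(K+d)T\log K}$ rate. If this delayed-feedback bound is taken as a black box from the delayed-bandits literature, the reduction above is then immediate, and the only remaining care is tuning the learning rate as a function of $K$, $d$, and $T$.
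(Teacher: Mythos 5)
Your proposal is correct and follows essentially the same route as the paper: a reduction to sequential adversarial bandits with delayed feedback, using equal batches of size $\Theta(T/B)$ so the delay is at most $T/B$, and then invoking the known $O\left(D+\sqrt{(K+D)T\log K}\right)$ delayed-bandit regret bound (the paper cites this as a black box rather than re-deriving the delayed EXP3 analysis, and the additive $D=T/B$ term is absorbed into the square root exactly as in your substitution).
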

\begin{proof}
The proof is via a reduction to the setting of sequential adversarial multi-armed bandits with delays, in which the reward received in each round is revealed to the player $D$ rounds later.
For this problem, 
\citet[Corollary~15]{delayed_bandits}
gave an algorithm with regret
\[
O\left( D + \sqrt{(K+D)T\log(K)}\right).
\]

Now, for the batched adversarial bandit problem, we partition the time horizon into $B$ batches of size $T/B$.
Thus, the reward of each pull is revealed at most $T/B$ rounds later, hence the above result gives an algorithm with regret
\begin{align*}
\mathbf{E}[
\textnormal{Regret}] \leq &O\left( \frac T B + \sqrt{\left(K+\frac T B\right)T\log(K)}\right)
\\=&
O\left(\sqrt{\left(K+\frac T B\right)T\log(K)}\right),
\end{align*}
as required.
\end{proof}


We complement the above result with a nearly tight lower bound.

\begin{lemma}
\label{lem:adversariallowerbound2}
Any non-adaptive algorithm for batched adversarial multi-armed bandits has regret at least $\Omega\left(\frac T {\sqrt{B}}\right)$.
\end{lemma}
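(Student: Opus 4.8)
The plan is to exhibit, for any fixed sequence of batch sizes $s_1,\dots,s_B$ with $\sum_b s_b = T$ (which a non-adaptive algorithm must commit to in advance, and which the adversary may therefore assume known), a randomized reward sequence against which the algorithm suffers expected regret $\Omega(T/\sqrt B)$; a standard averaging argument then produces a single deterministic reward table achieving the same bound. It suffices to use only two arms, setting the rewards of all other arms to $0$ in every round, which can only increase the regret.

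The key feature I would exploit is that within a batch the player receives no feedback, so the number of times she pulls each arm during batch $b$ is determined before the rewards of that batch are revealed. Concretely, I would draw independent Rademacher signs $\chi_1,\dots,\chi_B\in\{-1,+1\}$ and, throughout batch $b$, set the reward of arm $1$ to $(1+\chi_b)/2$ and that of arm $2$ to $(1-\chi_b)/2$, so that one arm pays $1$ and the other $0$ for the entire batch. If the player allocates $n_b^{(1)}$ and $n_b^{(2)}$ pulls to arms $1$ and $2$ in batch $b$, her reward there is $n_b^{(1)}(1+\chi_b)/2 + n_b^{(2)}(1-\chi_b)/2$; since the allocation is independent of $\chi_b$, its conditional expectation is $(n_b^{(1)}+n_b^{(2)})/2 \le s_b/2$. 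Summing over batches, the expected player reward is at most $T/2$.

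On the other hand, the best fixed arm in hindsight collects $\max\big(\sum_{b:\chi_b=1}s_b,\ \sum_{b:\chi_b=-1}s_b\big) = T/2 + \tfrac12\big|\sum_b \chi_b s_b\big|$. Subtracting, the expected regret over the $\chi_b$ is at least $\tfrac12\,\mathbf{E}\big|\sum_b \chi_b s_b\big|$. The crux is then an anti-concentration estimate: Khintchine's inequality gives $\mathbf{E}\big|\sum_b \chi_b s_b\big| \ge c\big(\sum_b s_b^2\big)^{1/2}$ for an absolute constant $c>0$, and Cauchy–Schwarz gives $\sum_b s_b^2 \ge (\sum_b s_b)^2/B = T^2/B$. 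Hence the expected regret is at least $\tfrac{c}{2}\,T/\sqrt B$, and fixing a realization of $\chi$ that attains at least this average yields the claimed deterministic lower bound.

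The main obstacle, and the only step beyond bookkeeping that requires care, is handling arbitrary (possibly very unequal) batch sizes: a single large batch only yields $\Omega(T/B)$, so the extra $\sqrt B$ must come from aggregating the independent per-batch fluctuations across all $B$ batches, which is exactly what the Rademacher/Khintchine step captures and why the worst case for the lower bound is equal batch sizes. I would also emphasize that non-adaptivity is used precisely to guarantee that the batch boundaries are fixed in advance, so the adversary can align the per-batch rewards with them; for adaptively chosen batch sizes this alignment fails, which is consistent with the weaker $\Omega(T/B)$ bound proved later. If one prefers to avoid quoting Khintchine, the same estimate $\mathbf{E}|\sum_b\chi_b s_b|\ge \sqrt{\tfrac{1}{3}\sum_b s_b^2}$ follows from the fourth-moment bound $\mathbf{E}(\sum_b\chi_b s_b)^4 \le 3(\sum_b s_b^2)^2$ by Hölder interpolation between the second and fourth moments.
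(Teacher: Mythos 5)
Your proposal is correct and follows essentially the same route as the paper: the identical two-arm construction with a Rademacher sign per batch, the reduction of the regret to $\tfrac12\,\E\bigl|\sum_b \chi_b s_b\bigr|$, and the same finish via $\sum_b s_b^2 \ge T^2/B$. The paper's anti-concentration step is exactly your fallback argument (the fourth-moment bound $\E X^4 \le 3(\E X^2)^2$ combined with H\"older interpolation), i.e., it proves the Khintchine $L^1$ lower bound from scratch rather than citing it.
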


\begin{proof}
Suppose the batch sizes are $t_1,\dots,t_B$, which are fixed before starting the game.
Consider the following example with $K=2$ arms.
For each batch, we choose a uniformly random arm and set its reward to $1$ throughout the batch, and set the other arm's reward to $0$.
Since the expected instantaneous reward of any round given information from past is $\frac 1 2$, the expected reward of \emph{any} non-adaptive algorithm is $\frac T 2$.

Next we show that the reward of one of the arms is $\frac T 2 + \Omega\left(\frac T {\sqrt B}\right)$.
Let $X \coloneqq t_1 R_1 + \dots + t_B R_B$, where each $R_i$ is $-1$ or $+1$ independently and uniformly at random.
Then the total rewards of the two arms in our instance are distributed as $\frac T 2 - \frac X 2$ and $\frac T 2 + \frac X 2$. To complete the proof, we need only show that $E[|X|]=\Omega( T/{\sqrt {B}})$.
We will use H\"older's inequality,
\[
\E[|f(X)|^p]^{1/p} \cdot
\E[|g(X)|^q]^{1/q}
\geq
\E [|f(X) g(X)|],
\]
valid for any $0<p,q<1$ with $1/p+1/q=1$, with
$f(x)\coloneqq x^{4/3}$,
$g(x)\coloneqq x^{2/3}$,
$p=3,q=3/2$, which yields
\begin{equation}
\label{eq:holder}
\left(\E|X|^4\right)^{1/3}
\left(\E|X|\right)^{2/3}
\geq
\E|X|^2.
\end{equation}
Thus, to lower bound $\E|X|$ we need to bound  
$\E|X|^2$
and
$\E|X|^4$.
For $\E|X|^2=\E X^2$, observe that
\begin{align*}
\E [X^2] & = 
\E \left[\left(\sum_{i=1}^{B} t_i R_i\right)^2\right] \\
& = \E \left[\sum_{i=1}^{B} t_i^2 R_i^2\right]
+ \E \left[\sum_{i\neq j} t_i t_j R_i R_j\right] \\
& = 
\E \sum_{i=1}^{B} t_i^2 R_i^2
= \sum_{i=1}^{B} t_i^2,
\end{align*}
since $\E R_i=0, \E R_i^2=1$, and $R_i$ and $R_j$ are independent for $i\neq j$.
Similarly, after expanding
$X^4 = (\sum t_i R_i)^4$,
all terms with odd powers of $R_i$ will have zero expectations, 
and only terms of type $t_i^4 R_i^4$ and $t_i^2 t_j^2 R_i^2 R_j^2$ remain after taking expectation, so we get
\begin{align*}
\E [X^4]   = 
\sum_{i=1}^{B} t_i^4 + 
6 \sum_{i<j} t_i^2t_j^2 \leq 
 3  \left(\sum_{i=1}^{B} t_i^2\right)^2.
\end{align*}
From~\eqref{eq:holder} we get
\[
\E [|X|] \geq \frac{\big(\E [X^2]\big)^{\frac 3 2}}{\big(\E [X^4]\big)^{\frac 1 2}} \geq \sqrt{\frac 1 3 \sum_{i=1}^B t_i^2}
\geq \frac T {\sqrt{3B}},\]
where the last inequality follows from the Cauchy-Schwarz inequality, recalling that $\sum_{i=1}^B t_i=T$.
Hence, the expected regret is at least $\E|X|/2=\Omega( T /\sqrt{B})$, completing the proof.
\end{proof}

We are now ready to prove the main result of this section, which is a minimax regret characterization of non-adaptive algorithms for batched adversarial multi-armed bandits.

\begin{theorem}
\label{thm:abmain}
The best achievable regret of a non-adaptive algorithm for batched adversarial multi-armed bandits is
\[
\widetilde{\Theta}
\left(
\sqrt{T \left(K + \frac T B\right)}
\right).
\]
\end{theorem}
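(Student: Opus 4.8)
The statement is a two-sided ($\widetilde\Theta$) bound, so the plan is to assemble a matching upper and lower bound, each of which is essentially already in hand. The upper bound is immediate: Lemma~\ref{lem:adversarialupperbound} exhibits a non-adaptive algorithm whose regret is $O\left(\sqrt{(K + T/B)T\log K}\right)$, which is $\widetilde O\left(\sqrt{T(K + T/B)}\right)$, the logarithmic factor being absorbed into the $\widetilde\Theta$. So essentially all the work is on the lower bound side.

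For the lower bound I would first rewrite the target. Expanding gives $T(K + T/B) = KT + T^2/B$, and by the elementary inequalities $\sqrt{a} + \sqrt{b} \geq \sqrt{a+b} \geq \max(\sqrt a, \sqrt b) \geq \tfrac12(\sqrt a + \sqrt b)$ we have
\[
\sqrt{T\left(K + \frac T B\right)}
= \Theta\left(\max\left(\sqrt{KT},\ \frac{T}{\sqrt B}\right)\right)
= \Theta\left(\sqrt{KT} + \frac{T}{\sqrt B}\right).
\]
Thus it suffices to establish two separate $\Omega$ lower bounds for any non-adaptive algorithm and take their maximum: a regret of $\Omega(T/\sqrt B)$ and a regret of $\Omega(\sqrt{KT})$. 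The first is exactly Lemma~\ref{lem:adversariallowerbound2}, already proved via the random-reward-per-batch construction. The second is the classical minimax lower bound for sequential adversarial multi-armed bandits of \citet{adversarialminimax}.

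The one point that genuinely needs justification is that this classical $\Omega(\sqrt{KT})$ bound transfers to the batched (indeed non-adaptive) setting. Here I would argue that a non-adaptive batched policy is a strict special case of a general sequential policy: it is a sequential policy that commits its actions in blocks and receives feedback only at block boundaries, so it is only weaker than an arbitrary sequential policy. Since the result of \citet{adversarialminimax} is a \emph{minimax} statement---there is a (randomized) reward sequence forcing regret $\Omega(\sqrt{KT})$ against \emph{every} sequential policy---that same hard instance forces regret $\Omega(\sqrt{KT})$ on every batched policy as well. Combining the two lower bounds then yields $\Omega\left(\max\left(\sqrt{KT},\ T/\sqrt B\right)\right) = \Omega\left(\sqrt{T(K + T/B)}\right)$, matching the upper bound up to the logarithmic factor and completing the proof.

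I do not anticipate a real obstacle, since both halves reduce to results already available; the only place to be careful is the transfer argument. The subtlety is to invoke the sequential bound in its correct form---as a lower bound holding against \emph{all} policies, so that restricting attention to the smaller class of non-adaptive batched policies can only preserve (not weaken) it---rather than as a bound tied to any particular algorithm.
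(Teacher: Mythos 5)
Your proposal is correct and takes essentially the same route as the paper: the upper bound is quoted from Lemma~\ref{lem:adversarialupperbound}, and the lower bound combines Lemma~\ref{lem:adversariallowerbound2} with the classical sequential $\Omega(\sqrt{KT})$ minimax lower bound, using the fact that $\sqrt{T(K+T/B)} = \Theta\left(\max\left(\sqrt{KT},\, T/\sqrt{B}\right)\right)$. The paper's proof is terser (it simply notes the $\Omega(\sqrt{KT})$ bound ``holds even in the sequential setting when $B=T$''), while you spell out explicitly why a minimax lower bound against all sequential policies transfers to the strictly weaker class of non-adaptive batched policies --- a step the paper leaves implicit.
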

\begin{proof}
An upper bound of
\(
O\left(
\sqrt{T \left(K + \frac T B\right)\log(K)}
\right)
\)
is proved in Lemma~\ref{lem:adversarialupperbound}.
A lower bound of $\Omega({T}/{\sqrt{B}})$ is proved in
Lemma~\ref{lem:adversariallowerbound2}, while a lower bound of $\Omega(\sqrt{KT})$ holds even in the sequential setting when $B=T$
\citep[Theorem~5.1]{adversarialbandits}.
\end{proof}

Finally, for adaptive algorithms for batched adversarial multi-armed bandits, we show a regret lower bound of $\Omega ( T / B)$.

\begin{theorem}
\label{thm:adversariallowerbound1}
Any adaptive algorithm for batched adversarial multi-armed bandits has regret at least $\Omega( T / B)$.
\end{theorem}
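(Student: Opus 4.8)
The plan is to prove this lower bound via Yao's minimax principle: I will exhibit a distribution over two-armed reward sequences on which \emph{every} algorithm (deterministic, after conditioning on its internal randomness) suffers expected regret $\Omega(T/B)$. The essential difficulty, compared with Lemma~\ref{lem:adversariallowerbound2}, is that the batch sizes are now chosen adaptively, so I cannot treat the sizes $t_1,\dots,t_B$ as a fixed sequence; the construction must force large regret even though the algorithm may grow or shrink batches in response to what it has seen.

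The hard distribution I would use has $K=2$ arms and is governed by two independent uniform random variables: a good arm $g\in\{1,2\}$ and a \emph{reveal time} $\tau\in[T]$. In every round $t<\tau$ both arms return reward $1/2$, while in every round $t\geq\tau$ arm $g$ returns $1$ and the other arm returns $0$. Writing $\ell\coloneqq T-\tau+1$ for the length of the informative suffix, a short computation shows that the best fixed arm in hindsight is $g$, with total reward $T/2+\ell/2$, since arm $g$ collects $1/2$ on the first $T-\ell$ rounds and $1$ thereafter.

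The key structural observation, which is where I expect to spend the most care, is that every reward observed strictly before round $\tau$ equals $1/2$ and therefore carries no information about $(g,\tau)$. Consequently, after conditioning on the algorithm's internal randomness, the sequence of batch sizes it commits to \emph{up to and including the batch that contains round $\tau$} is a fixed sequence, independent of $g$ and of the realized value of $\tau$: the algorithm must announce the size of that straddling batch before observing any reward correlated with $g$. Let $b^{\star}$ be the batch containing $\tau$, let $b_j$ be its right endpoint, and let $m\coloneqq b_j-\tau+1$ be the number of its rounds lying at or after $\tau$ (the ``overhang''). The actions played in those $m$ rounds are independent of $g$, so each earns expected reward exactly $1/2$; in every later round the algorithm has already seen an informative reward, knows $g$, and plays optimally for reward $1$. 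Combining this with the best-fixed-arm value $T/2+\ell/2$, the expected regret (over $g$) equals precisely $m/2$, that is, half the overhang of the batch straddling the reveal time.

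It then remains to average over $\tau$. Denoting the (now deterministic) batch boundaries by $0=b_0<b_1<\cdots$ ending at $T$ with sizes $t_j=b_j-b_{j-1}$, for $\tau$ uniform on $[T]$ the expected overhang is $\tfrac{1}{T}\sum_j\sum_{k=1}^{t_j}k=\tfrac{1}{2T}\sum_j t_j(t_j+1)\geq\tfrac{1}{2T}\sum_j t_j^2$, so the expected regret is at least $\tfrac{1}{4T}\sum_j t_j^2$. Since there are at most $B$ batches summing to $T$, Cauchy--Schwarz gives $\sum_j t_j^2\geq T^2/B$ (using more than $B$ batches is impossible, and using fewer only increases this quantity), whence the expected regret is at least $\tfrac{T}{4B}$. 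By the probabilistic method, some fixed realization of $(g,\tau)$ attains at least this expected regret against the given algorithm, proving the claimed $\Omega(T/B)$ bound. The only genuinely delicate point is the structural observation above; once it is established that adaptivity buys nothing before the reveal time, so that the straddling batch's size is ``locked in'' while the algorithm is still ignorant of $g$, the rest is the elementary Cauchy--Schwarz estimate.
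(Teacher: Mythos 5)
Your proof is correct and follows essentially the same route as the paper's: the same two-arm hard instance with a uniformly random switch time and uniformly random good arm, regret localized to the post-switch rounds of the batch straddling the switch, and the same Cauchy--Schwarz step giving $\sum_j t_j^2 \geq T^2/B$ and hence regret $\Omega(T/B)$. The only difference is that you spell out carefully why adaptivity buys nothing (pre-switch observations are uninformative, so conditioned on internal randomness the batch sizes up to and including the straddling batch are fixed before any signal about the good arm arrives), a point the paper's proof dispatches with a one-sentence remark.
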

\begin{proof}
We first prove the lower bound for non-adaptive algorithms and then extend it to adaptive algorithms.

Let $K=2$ and consider the following reward sequences. In the beginning, both arms have $0$ rewards. Then, at a round chosen uniformly at random from $\{1,\dots,T\}$, the reward of one of the arms becomes $1$ and stays $1$ until the end. Hence, the expected reward of the best arm is $\frac T 2$.

The switching happens inside one of the batches. The expected number of $1$s that fall in that batch is half of the size of the batch, and for any strategy chosen by the player in that batch, her expected regret is at least a quarter of the size of the batch.

Denote the batch sizes by $t_1,\dots,t_B$. The probability that the (random) switching time falls in the $i$th batch is $\frac {t_i}{T}$. Hence, the expected regret is at least
\[
\sum_{i=1}^{B} \frac{t_i}{T} \cdot \frac{t_i} 4 =
\frac{1}{4 T} \cdot \sum_{i=1}^{B} t_i^2
\geq \frac {1}{4 T} \cdot B \cdot \Big(\frac T B \Big)^2
= \frac {T}{4B},
\]
completing the proof.

To extend the lower bound to adaptive algorithms, note that the defined distribution over reward sequences does not depend on the batch sizes or the algorithms' actions.
Hence, the lower bound holds for \emph{any} sequence of batch sizes, deterministic or randomized.
\end{proof}



\section{Conclusion}
\label{sec:conclusion}
This paper presented a systematic theoretical study of the batched bandits problem in stochastic and adversarial settings.
We have shown a large contrast between the stochastic and adversarial multi-armed bandits: while in the stochastic case
a logarithmic number of batches are enough to achieve the optimal regret, the adversarial case needs a polynomial number of batches.
This motivates studying batched versions of models in-between stochastic and adversarial; one such model is the non-stationary model.

Starting from the stochastic model, a \emph{non-stationary} multi-armed bandit problem is one in which the arms reward distributions may change over time, but there is a restriction on the amount of change.
A natural assumption is to bound the {\em number of changes} in the arms' reward distributions. 
Let $S$ denote the allowed number of changes (or \emph{switches}) of the vector of reward distributions during the $T$ rounds of the game.
The case $S=0$ corresponds to stochastic bandits, while $S=T$ corresponds to adversarial bandits.
This problem has been studied in the sequential setting and various algorithms have been devised based on, e.g., UCB~\citep{switchingbandits} and EXP3~\citep{adversarialbandits}.
It is only natural to study the regret of non-stationary bandits in the batch mode; in particular, the construction of Theorem~\ref{thm:adversariallowerbound1} gives a regret lower bound of $\Omega(T/B)$ 
for any $S>0$; proving regret bounds for all $S$ is an interesting avenue for further research.

For batched stochastic multi-armed bandits with two arms,
\citet[Theorem~2]{batchedtwoarms} provide a regret bound of 
\( O\left(\left(\frac{T}{\log T}\right)^{1/B} \frac{\log T}{\Delta}\right)
\).
It is natural to ask whether this bound can be extended to the case $K>2$: is there an algorithm with regret bounded by
\[
O\left(\left(\frac{T}{\log T}\right)^{1/B}  \sum_{i:\Delta_i>0}\frac{\log(T)}{\Delta_i}
\right)?
\]




\bibliographystyle{plainnat}
\bibliography{batchedmab}

\end{document}